\let\proof\@undefined
\let\endproof\@undefined
\newtheorem{theorem}{Theorem}[section]
\newtheorem{lemma}[theorem]{Lemma}
\theoremstyle{definition}
\newtheorem{problem}[theorem]{Problem}
\theoremstyle{remark}
\newtheorem{remark}[theorem]{Remark}
\theoremstyle{definition}
\DeclareMathOperator*{\argmax}{arg\,max}
\newcommand\oprocendsymbol{\hbox{$\bullet$}}
\newcommand\oprocend{\relax\ifmmode\else\unskip\hfill\fi\oprocendsymbol}
\title{\Large \textbf{On Re-Balancing Self-Interested Agents in Ride-Sourcing Transportation Networks}}
\author{Armin Sadeghi \qquad Stephen L. Smith \thanks{This research is partially
    supported by the Natural Sciences and Engineering Research Council
    of Canada (NSERC). }
  \thanks{The authors are with the Department of Electrical and
    Computer Engineering, University of Waterloo, Waterloo ON, N2L 3G1    Canada  (\armin; \smith) }}
\begin{document}
\maketitle

\begin{abstract}
This paper focuses on the problem of controlling self-interested drivers in ride-sourcing applications. Each driver has the objective of maximizing its profit, while the ride-sourcing company focuses on customer experience by seeking to minimizing the expected wait time for pick-up. These objectives are not usually aligned, and the company has no direct control on the waiting locations of the drivers. In this paper, we provide two indirect control methods to optimize the set of waiting locations of the drivers, thereby minimizing the expected wait time of the customers: 1) sharing the location of all drivers with a subset of drivers, and 2) paying the drivers to relocate. We show that finding the optimal control for each method is NP-hard and we provide algorithms to find near-optimal control in each case. We evaluate the performance of the proposed control methods on real-world data and show that we can achieve between $20\%$ to $80\%$ improvement in the expected response.
\end{abstract}

%
%
\section{Introduction}
\label{sec:introduction}

In recent years, ride-sourcing services such as UberX and Lyft have emerged as an alternative mode of urban transportation. The reduced wait times for pickup of these services is the compelling feature compared to the conventional taxi services~\cite{rayle2014app}. A key factor that affects the response time of the service is where the drivers wait to respond to the next ride request.  Ride-sourcing companies do not have control over the position of the drivers as they are self-interested units maximizing their objectives. Therefore, a challenge is to ensure the drivers are distributed throughout the city in order to minimize the expected wait time of the customers.  This must be done either by providing information to drivers, or through incentives (payment) that make relocation attractive.  

As a method for both re-balancing and increasing the supply of drivers, Uber introduced surge pricing in high demand areas.  This reduces the expected response time of servicing the requests by drawing more drivers to those areas. However, the surge pricing can draw drivers away from lower demand areas, resulting in higher wait times in those areas and more imbalance~\cite{surgeprice, rosenblat2016algorithmic}.

The problem of servicing requests in ride-sourcing networks can be divided into two major problems: 1) assignment of the ride request to the drivers; and 2) re-balancing of the drivers for future ride requests. In this paper, we focus on the re-balancing problem for a subset of drivers to service ride requests that arrive sequentially in an environment. The drivers' motion in the environment is captured as a road-map (i.e., graph), and each ride request arrives at a node of the graph according to a known arrival rate (see Figure~\ref{fig:intro_fig}). The drivers are self-interested units maximizing the expected profit of their workday. Hence, the ride-sourcing company has no direct control over the waiting locations of the drivers. The objective of the ride-sourcing company is to incentivize the drivers to relocate to a set of waiting locations that minimizes the expected wait-time of the customers.

\emph{Related Work:}  The problem of dispatching taxis to service ride requests arriving sequentially over time has been the subject of extensive research~\cite{zhang2015performance, hyland2018dynamic, DBLP:journals/corr/abs-1805-02014, maciejewski2016assignment}. These studies focus on policies to optimally assign the ride requests to the taxis. In contrast, we focus on the waiting locations of the drivers that minimize the expected wait time of the customers. We assume the ride requests are assigned in a first-come-first-serve fashion to the closest available driver. Assigning the closest available driver to each request is the common method employed by the ride-sourcing companies~\cite{uberassign}.

\begin{figure}
\centering
    \includegraphics[width=\linewidth]{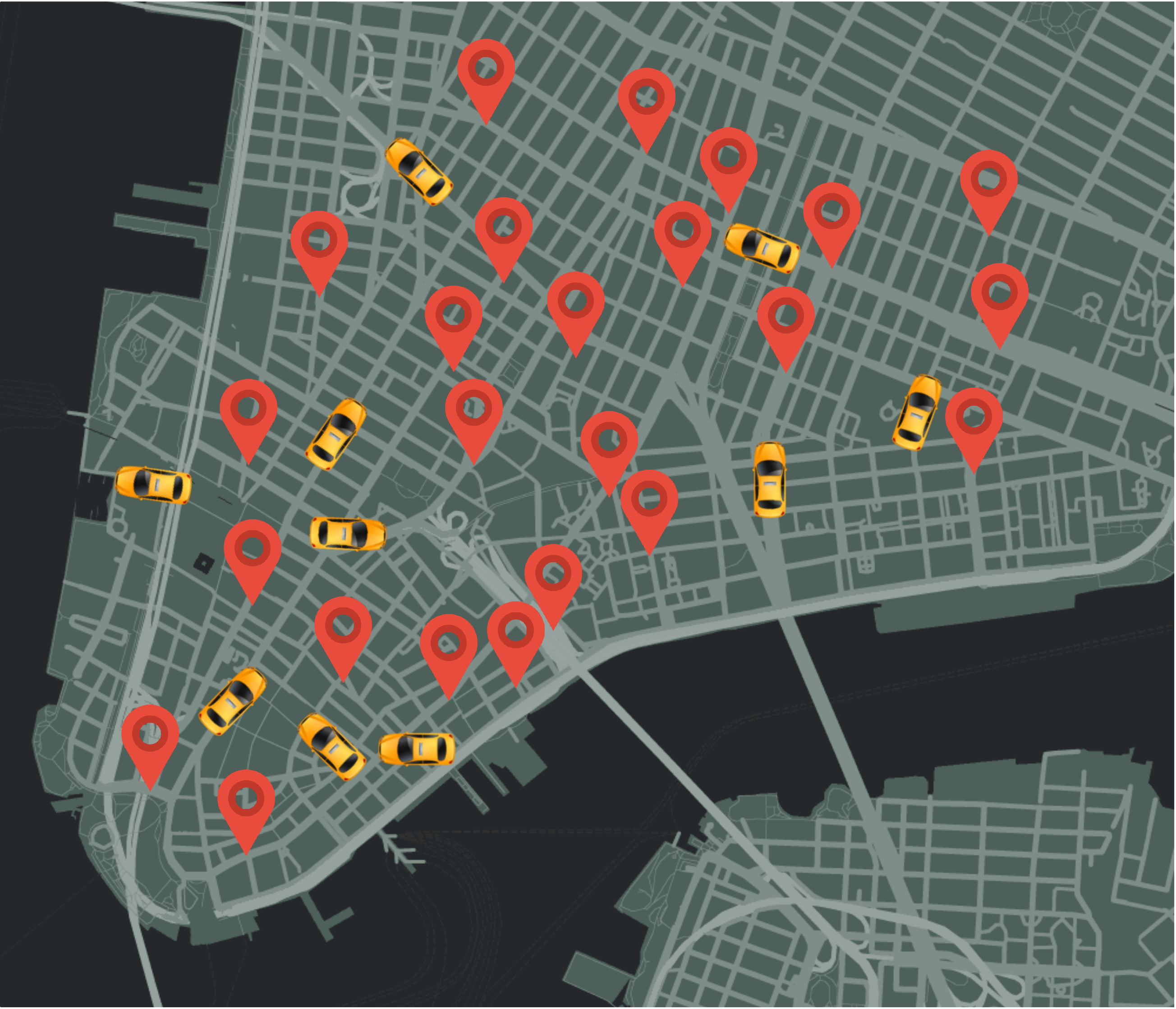}
     \caption{A set of drivers in the ride-sourcing system and a set of locations with high probability of ride request arrival.}
\label{fig:intro_fig}
\end{figure}

The problem of re-balancing service units in the environment has been studied for various applications. In mobility-on-demand problem (MOD)~\cite{pavone2012robotic, tsao2018stochastic, calafiore2017flow}, a group of vehicles are located at a set of stations. The customers arrive at the stations, hire the vehicles for ride, and then drop the vehicles off at their destination stations. The objective is to balance the vehicles at the stations to minimize the expected wait time of the customer. In comparison to MOD, we consider the customer wait-time as the time between the request arrival and the pick-up time, which incorporates the distance of the closest available vehicle to the pick-up location.

The facility location problem~\cite{shmoys2000approximation, arya2004local} and its extension to the mobile
facility location problem (MFL)~\cite{demaine2009minimizing} is the problem of
distributing facilities in a set of locations to respond to the demands arriving at different locations. The objective is to minimize the time to respond to the
demands and the total cost of opening facilities. A special case
of the facility location problem is the $k$-median problem~\cite{arya2004local} where the number of open facilities is limited and the cost of opening a facility is zero. In~\cite{sadeghi2018re}, we addressed a multi-stage MFL problem where we relocate a set of autonomous vehicles to minimize expected response time for future requests in a receding horizon manner.


In the aforementioned studies, the actions of service units are controlled by a central unit and the objective of the service units is aligned with the global objective. However, we consider self-interested units maximizing their own profit such that the ride-sourcing companies have no direct control on their decisions. A closely related problem is that of Voronoi games on graphs~\cite{bandyapadhyay2015voronoi} where requests arrive on the vertices of a graph and the objective of each self-interested service unit is to maximize the number vertices assigned to them. They cast the problem as a game between the service units prove that the problem of finding the pure Nash equilibrium on general graphs is NP-hard. In~\cite{salhab2017dynamic}, the authors provide the best response strategy for each driver and they approximate Nash equilibria, which can be utilized by the ride-sourcing companies to incentivize the drivers in a way to maximizes a global objective. These studies focus on the strategies of the self-interested service units, in contrast, we focus on finding the optimal policy for the ride-sourcing company to optimally respond to the ride requests.

\emph{Contributions:}
The contributions of this paper are threefold. First, we formulate the re-balancing problem of self-interested service units. Second, we propose two indirect method to relocate the service units to minimize the expected response time and provide algorithms with near-optimal solutions. Third and finally, we evaluate the performance of the two proposed control methods on real-world ride-sourcing data.

The paper is organized as follows. In Section~\ref{sec:formulation}, we
formulate the problem of minimizing customer wait-time with self-interested service units. In Section~\ref{sec:sharing_info},
we provide the first indirect control method based on sharing the information on the location of the drivers. Section~\ref{sec:service_providers-game}
consists of the second control method based on incentive pay for the drivers to relocate to desired waiting locations. Finally in
Section~\ref{sec:simulation_results}, we provide an extensive set of experiments, on real-world ride-sourcing data, characterizing the performance of the two control methods.

%
%

\section{Problem Formulation}
\label{sec:formulation}
Consider a set of $m$ drivers and a set of pick-up and drop-off locations $V$. Let $G=(V, E, c)$ be a metric graph on vertices $V$, let $E$ be the set of edges between the locations and $c: E \rightarrow \mathbb{R}_+$ be the function assigning a travel time to each edge of the graph. The drivers wait on a subset of the vertices for the next request, which we call the configuration of the drivers $Q$. The set of all configurations of the drivers is denoted by $\mathcal{Q}$. Each driver $i$ is aware of the position of a subset of the drivers $I_i \subseteq Q$,  for instance, each driver may be aware of the location of the other drivers in its vicinity.

We assume that the requests arrive at each vertex $u$ according to an independent Poisson process with arrival rate $\lambda_u$.  Upon a request arrival, the closest driver to the vertex of the request is assigned to service the request. Let $p_a(u)$ denote the arrival probability, which is the ratio of number of requests arriving at $u$ to the total number of requests arriving in a period of time.  Let the drop-off probability $p_d(\mathrm{dropoff} = w|\mathrm{pickup} =v)$ be the probability that a request with pickup location at vertex $v$ has a drop-off location at vertex $w$.

Driver $i$'s \emph{perception of her expected profit} is a function of her information $I_i$ on the location of other drivers, environment parameters such as arrival times, her waiting location $q_i$ and the period of working time $B_i$, denoted by $\mathcal{V}_i(u, B_i)$.  For the development of our main control methods we do not assume any specific form of this function.  We do assume, however, that the ride-sourcing company has access to this function, obtained through data of driver behavior.  In Section~\ref{sec:simulation_results} we present one potential model of $\mathcal{V}_i(u, B_i)$, which is then used for simulating the two control methods.


\emph{Drivers' objective:} Each driver is a self-interested unit, therefore, they will wait at a location that maximizes their expected profit, i.e.,
\begin{align}
    \label{eq:driver_utility}
    \argmax_{u \in V} -\sigma c(q_i, u) &+ \mathcal{V}_i(u, B_i - c(q_i, u)),    
\end{align}
where $\sigma$ is the cost per minute of driving. 

\emph{Global objective:} In addition to the objective of each driver, there is a global objective for the service providers such as Uber and Lyft to maximize the service quality by minimizing the expected wait time of the customers until pick-up, i.e., 
\begin{equation}
\label{eq:global_obj}
    \min _{Q \in \mathcal{Q}} D(Q) = \sum_{u\in V}\min_{q_i \in Q} p_a(u) c(q_i, u),  
\end{equation}

The main challenge in optimizing the global objective is that the drivers are self-interested units and the service provider does not have any direct control on the configuration of the vehicles. Therefore, the service provider is not able to minimize the expected response time to the requests directly. The two indirect control methods proposed in this paper incentivize the drivers to relocate to desired waiting locations. The first control method exploits the dependency of the expected profit of the drivers on their information $I_i$. The service provider can share more information on the location of drivers with a subset of them to manipulate their decision towards relocating to a desired waiting location.  We refer to this as the \emph{sharing information} control method. 
The second proposed control method, incentivizes the drivers to relocate to desired waiting locations with payments, which we refer to as the \emph{pay-to-control} method. These control methods are applicable to various models of driver behaviour $\mathcal{V}$.  

 
In the following sections, we provide a detailed description of the two control methods and propose algorithms to find near optimal controls.

\section{Control by Sharing Information}
\label{sec:sharing_info}
 In this section, we provide an indirect control on the configuration of the drivers exploiting the fact that the optimal waiting location for each driver in Equation~\eqref{eq:driver_utility} is a function of the information provided to the driver regarding the position of the other drivers, i.e., $I_i$.

Figure~\ref{fig:info_share_example} demonstrates the importance of information on an instance of the ride-sourcing problem with two vehicles and two request locations. The locations are within unit distance apart and the arrival rate at locations $v_1$ and $v_2$ are $0.1$ and $0.2$, respectively.  The vehicles are initially located at $v_1$ and will relocate to the best waiting location, namely optimizing Equation~\eqref{eq:driver_utility}. Figure~\ref{subfig:info_share_fair} shows the two scenarios where both vehicles are provided the same information, i.e., $I_1=I_2= \emptyset$ and $I_1=I_2= Q$. Note that the configuration of the vehicles when they are provided the same information is the worst possible configuration for the global objective. However, illustrated in Figure~\ref{subfig:info_share_partial}, providing the information to a subset of the vehicles results in the optimal configuration for the global objective.
  \begin{figure}
      \centering
      \begin{subfigure}[t]{0.49\linewidth}
      \includegraphics[width=\textwidth]{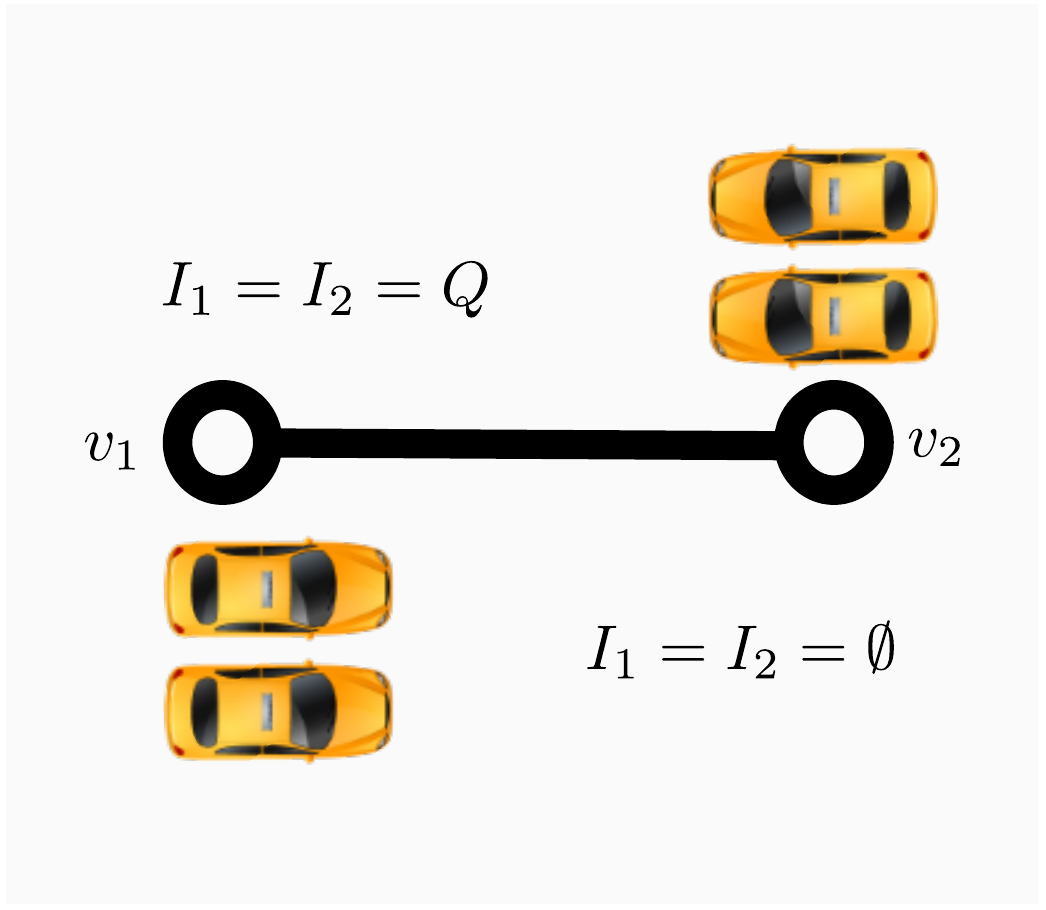}
      \caption{Equal information sharing}
      \label{subfig:info_share_fair}
      \end{subfigure}
      \begin{subfigure}[t]{0.49\linewidth}
      \includegraphics[width=\textwidth]{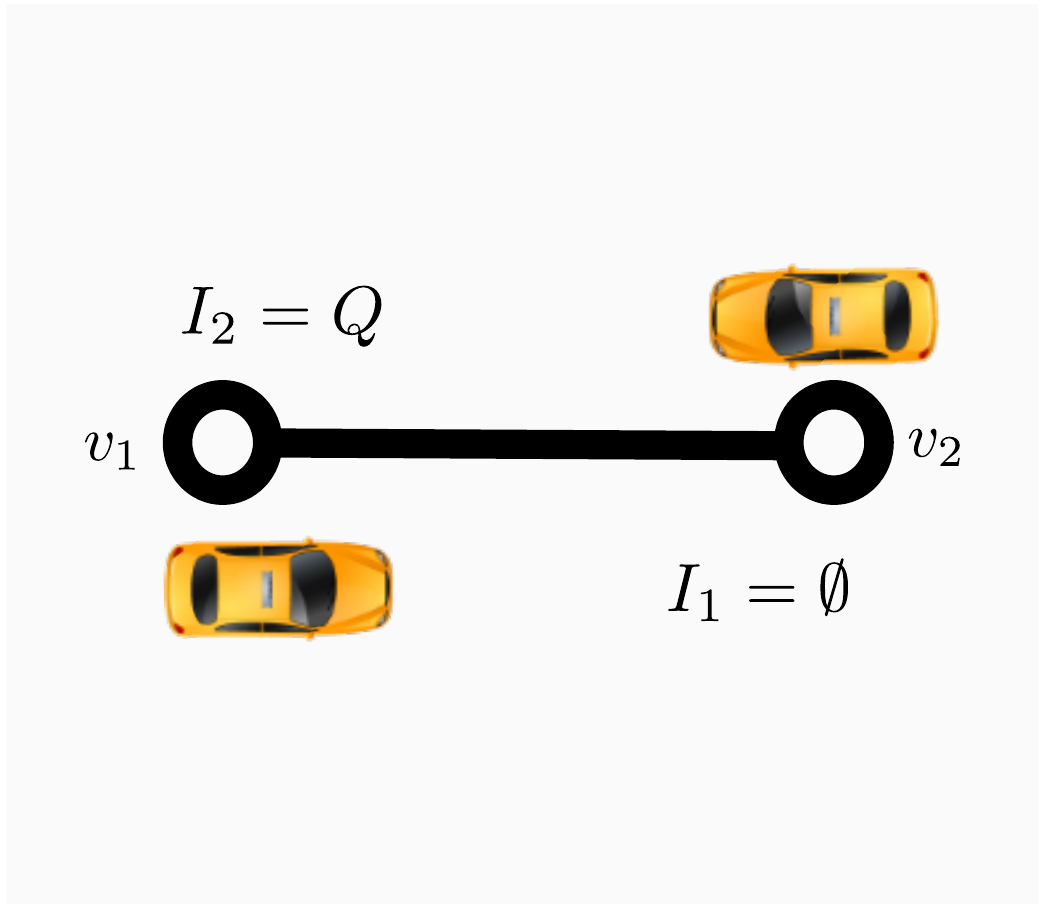}
      \caption{Partial information sharing }
      \label{subfig:info_share_partial}
      \end{subfigure}
      \caption{Instance of ride-sourcing problem with two vehicles and two request arrival locations. }
      \label{fig:info_share_example}
  \end{figure}
  
    The information sharing problem consists of deciding the subset of drivers we share  information with and the information shared with each driver. However, for this work, we consider the binary decision where either full information or no information is provided.
  
  Let $q'_{i, Q}$ (resp. $q'_{i, \emptyset}$) be the new waiting location selected by driver $i$ from Equation~\eqref{eq:driver_utility} with information $I_i = Q$ (resp. $I_i = \emptyset$). Let $F_i = \{q'_{i,Q}, q'_{i,\emptyset}\}$ be the set of candidate waiting locations for driver $i$. If there exist $i, j \in [m]$ such that $q'_{i, I_i} = q'_{j, I_j}$, then we create a duplicate vertex $u$ for $q'_{j, I_j}$ such that $c(q'_{i, I_j}, v) = c(u, v)$ for all $v \in V$. 
  The formal definition of the problem of sharing information with drivers is given as follows:
\begin{problem}
\label{prob:share_info}
Consider a metric graph $G = (\cup_{i = 1}^{m}F_i\cup V, E, c)$. Find a new configuration $Q'$ by picking only one vertex from each $F_i$, i.e., such that $|Q' \cap F_i| = 1$ for each $i$, while minimizing  the global objective $D(Q') = \sum_{u\in V}\min_{q_i \in Q'} p_a(u) c(q, u)$.
\end{problem} 

\begin{figure}
\centering
    \includegraphics[width=\linewidth]{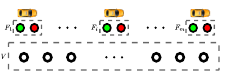}
     \caption{An instance of Problem~\ref{prob:share_info}. The green vertices represent the desired waiting location of each driver if $I_i = \emptyset$, and the red vertices represent the waiting location of the drivers if $I_i = Q$.}
\label{fig:information_sharing_problem}
\end{figure}
Figure~\ref{fig:information_sharing_problem} shows an instance of the information sharing problem. The green vertices are the waiting locations of the drivers if they have no information on the position of the other drivers, and the red vertices are the waiting locations of the drivers if the information is provided to each driver by the service provider. Let $Q'$ be the solution to Problem~\ref{prob:share_info} in which if $q_{i, Q} \in Q'$ then the driver $i$ is provided complete information of the position of the other drivers, and no information is available for driver $i$ if $q'_{i, \emptyset} \in Q'$.    

First, we analyze the complexity of Problem~\ref{prob:share_info}, and then we provide an LP-rounding algorithm to find the controls.

\subsection{Proof of Hardness}
We now prove the NP-hardness of Problem~\ref{prob:share_info} with a reduction from CNF-SAT~\cite{schuler2005algorithm} as follows:

Consider  an  instance  of  CNF-SAT  with $n$ Boolean  variables  and $m$ clauses.  We will reduce this problem to Problem~\ref{prob:share_info}.
\begin{enumerate}
\item Let $\cup_{i=1}^{m}F_i$ contain $2n$ vertices, partitioned into $n$ sets of size two. The set $F_i$ contains two vertices, $\{v_i^T, v_i^F\}$, where $v_i^T$ will correspond to setting the $i$th SAT variable to true (i.e., the positive literal) and $v_i^F$ will correspond to setting it to false (i.e., the negative literal). 
\item We let $V$ contain $m$ vertices, one representing each clause in the SAT formula.
\item  Let $E$ contain an edge for each $v \in \cup_{i=1}^{m}F_i$ and $w \in V$.
\item For each $e = (v, w) \in E$, we set its cost to $1$ if the literal $v$ appears in the clause $w$, and $2$ if the literal does not. Note that the costs are metric.
\item Let $P_u = 1/m$ for each $u \in V$.
\end{enumerate}

Now, we solve the instance of Problem~\ref{prob:share_info}. If it returns a subset of $\cup_{i=1}^{m}F_i$ with cost exactly $1$, then for each clause $c \in V$, there is literal in $\cup_{i=1}^{m}F_i$ with edge cost of $1$ to $c$.  This implies that the literal chosen from each subset in the partition of $\cup_{i=1}^{m}F_i$ gives a satisfying truth assignment for the SAT instance.  If the subset returned has a cost greater than $1$, then there exists a clause $w \in V$ for which every chosen literal has edge cost of $2$. Thus, this clause is not satisfied and no satisfying instance exists.

\subsection{Linear-Program Rounding Algorithm}
Given that Problem~\ref{prob:share_info} is NP-hard, we turn our focus to suboptimal algorithms.  In particular, we provide a simple Linear Program (LP)-rounding algorithm for Problem~\ref{prob:share_info}. Although we do not provide bound on the performance of the LP-rounding algorithm, we evaluate the performance of the algorithm on an extensive set of real-world ride sourcing data in Section~\ref{sec:simulation_results} and we show that the proposed algorithm is on average within $0.014\%$ of the optimal.

First we cast Problem~\ref{prob:share_info} as an integer linear program (ILP), then we propose a rounding algorithm based on the solution to the relaxation of the ILP. Let integer parameter $x_{u, v} \in \{0, 1\}$ denote the assignment of a request at $v$ to a driver at vertex $u$ if $x_{u, v} = 1$, and $x_{u, v} = 0$ otherwise. Let the integer parameter $y_u \in \{0, 1\}$ for all $u \in \cup_{i}^{m}F_i$ represent if there is a driver assigned to wait for next request arrival at $u$. Then we write the ILP for Problem~\ref{prob:share_info} as follows:

\begin{align}
\label{eq:ILP_share_info}
\text{minimize}&\quad \sum_{v \in V}^{}\sum_{u \in \cup_{i}^{m}F_i}^{} p_a(v)c(u, v)x_{u,v }\\
\text{subject to:}& \sum\limits_{u \in \cup_{i}^{m}F_i}   x_{u, v} \geq 1,  \quad \ \forall v \in V\label{eq:constraint_service_x}\\
    &y_u \geq x_{u, v},  \quad \ \forall v \in V, u \in \cup_{i}^{m}   F_i\label{eq:constraint_y_g_x} \\
    &y_u + y_v  = 1,  \quad \ F_i = \{u, v\}, i \in [m]\label{eq:constraint_one_in_set} \\
    &y_u, x_{u,v} \in \{0,1\}, \quad \ \forall v \in V, u \cup_{i}^{m}F_i\nonumber
\end{align}
By constraint~\eqref{eq:constraint_service_x}, a feasible solution assigns each request location to a driver. Equation~\eqref{eq:constraint_y_g_x} ensures that a request is assigned to $u$ only if there is a driver located at $u$, and finally Equation~\eqref{eq:constraint_one_in_set} shows that in a feasible solution only one of the candidate waiting locations is chosen from each subset $F_i$, which represent that either the information is provided to a driver or otherwise.

Now we propose our LP-rounding algorithm for Problem~\ref{prob:share_info}. Let $(\mathbf{x}', \mathbf{y}')$ be the solution to the LP relaxation of ILP~\eqref{eq:ILP_share_info}. Without loss of generality for all $F_i = \{u, v\}, i \in [m]$, let $y_u \geq 1/2$ and $y_v \leq 1/2$. Given solution $(\mathbf{x}', \mathbf{y}')$ we construct an integer solution to ILP~\eqref{eq:ILP_share_info} by setting $y_u = 1$ for each vertex $u$ with $y'_u > 1/2$ and $y_v = 0$. In a case, $F_i = \{u, v\}$ and $y'_u = y'_v = 1/2$, we set $y_u = 1$ where $u$ is the optimal waiting location of driver $i$ with $I_i = \emptyset$. Then we assign each vertex $v \in V$ to the closest vertex $u$ in $\cup_{i = 1}^{m} F_i$ with $y_u = 1$ by setting $x_{u, v} = 1$. Note that the constructed solution $(x, y)$ satisfies the constraint of ILP~\eqref{eq:ILP_share_info}, therefore, it is a feasible solution to Problem~\ref{prob:share_info}. Also, observe that the optimal objective value to the LP relaxation is a lower-bound on the optimal value of ILP~\eqref{eq:ILP_share_info} and provides a bound on the performance of the LP-rounding algorithm. 

In the solution to the information sharing problem, if driver $i$ is selected to receive information on the location of  drivers, a snapshot of the location of drivers is presented to driver $i$ and the driver can calculate their expected profit based on complete information. This method employed at each time step and presents information to a driver if there is an opportunity to improve the expected response time.

The problem of information sharing indirectly controls the configuration of the drivers by providing information to a subset of them, however, the possible configurations are limited to the candidate waiting locations of the drivers. In the following section, we provide the details on the pay-to-control method for the service provider to optimize the global objective.

\section{Pay to control}
\label{sec:pay_to_control}
Each driver as a self-interested unit chooses its waiting location by maximizing the profit in Equation~\eqref{eq:driver_utility}. To convince the vehicles to relocate to another configuration, the service provider needs to compensate for the difference between their expected profit of the new location and their expected profit for the waiting location from Equation~\eqref{eq:driver_utility}. First, we pose the problem between the drivers and the service provider as a game. Then we provide an approximation algorithm to find the optimal policy for the service provider.
 
\subsection{Service Provider's Game}
\label{sec:service_providers-game}
Let $d_i$ be the incentive per unit distance offered to driver $i$. Let $Q = \{q_1, \ldots, q_m\}$ (resp. $Q' = \{q'_1, \ldots, q'_m\}$) be the configuration of the drivers before (resp. after) the incentive pay.  The game between the drivers and service provider consists of the following:
\begin{itemize}
\item A set of $m$ players and a service provider,
\item An action set $A_i$ for each driver $i$, which is the waiting locations in the graph, i.e. $A_i = V \ \forall i \in [m]$. The action set of the service provider is $\mathcal{Q}$; and
\item The profit function of the service provider is 
\begin{align*}
    h(Q')&= \sum_{i \in m} d_i \sigma c(q_i, q'_i) + \beta D(Q'),
\end{align*}
where $\beta \geq 0$ is a user-defined parameter that indicates the importance of the service quality for the service provider with respect to the incentive pay. For a small value of $\beta$, the incentive pay is in the priority, thus the service provider will offer the waiting locations close to the driver's desired waiting location, however, for large values of $\beta$, the service provider accepts high incentive pay to relocate the drivers to the configuration with minimum expected response time. 

\item The profit of driver $i$ is the maximum of the expected profit of the offered waiting location with incentive pay and the expected profit of the waiting location from Equation~\eqref{eq:driver_utility}, i.e.,
\begin{align*}
    \max \{(d_i &- 1) \sigma c(q_i, q'_i) +\mathcal{V}(q'_i, B_i - c(q_i, q'_i)),\\& \max_{u \in V}- \sigma c(q_i, u) + \mathcal{V}(u, B_i - c(q_i, u))\}.
\end{align*}
\end{itemize}

This is an instance of a leader-follower game~\cite{basar1999dynamic}. The service provider offers an incentive based on its utility and the drivers as followers either take the offer or reject it. The service provider is aware of the best action of the drivers given any action taken by the service provider (i.e., incentive pay and the offered waiting location). The objective is to find the optimal strategy for the service provider to minimize a linear combination of the incentive pay and the expected response time by relocating the drivers to the desired configuration.

Driver $i$ will accept the offer by the service provider to relocate to $q'_i$ only if the offered incentives surpass the best-expected profit of the driver. Since the profit functions of the drivers are known to the service provider, then the minimum $d_i$ in which the drivers will accept the offer to move to configuration $Q'$ is
\begin{align}
   \label{eq:minimum_d_i}
    d_i &= \frac{\max_{u \in V}- \sigma c(q_i, u) + \mathcal{V}_i(u, B- c(q_i, u)) }{c(q_i, q'_i)}\nonumber
    \\& - \frac{ \mathcal{V}_i(q'_i, B_i - c(q_i, q_i'))}{\sigma c(q_i, q_i')} + 1.    
\end{align}  
In the equation above, $\max_{u \in V}- \sigma c(q_i, u) + \mathcal{V}_i(u, B- c(q_i, u)) $ is the maximum expected profit of the driver $i$ by relocating to a new waiting location, and $ V_i(q'_i, B_i - c(q_i, q_i'))$ is the expected profit of driver $i$ by waiting at the location $q'_i$ offered by the service provider.  
Knowing this minimum $d_i$, the objective of the service provider becomes
\begin{align}
    \label{eq:incentive_pay_problem}
    h(Q')&= \sum_{i \in m}  \sigma c(q_i, q'_i) + \beta \sum_{u \in V} p_u \min_{i \in [m]}c(q'_i, u)
    \\& + \sum_{i \in m}\max_{u \in V}- \sigma c(q_i, u) + \mathcal{V}_i(u, B- c(q_i, u))\nonumber
    \\& - \sum_{i \in m} \mathcal{V}_i(q'_i, B_i - c(q_i, q_i')).\nonumber
\end{align}
Observe that $\sum_{i \in m}\max_{u \in V}- \sigma c(q_i, u) + \mathcal{V}_i(u, B- c(q_i, u))$ is independent of the optimization parameters. Therefore, the problem of minimizing the utility function of the service provider $h$ has the mobile facility location (MFL) problem as a special case where $V_i(v, B_i - c(u, v)) = 0$ for all $u,v \in V$ and $i\in [m]$. The MFL is a well-known NP-hard problem~\cite{ahmadian2013local} where given a metric graph $G = (F \cup D, E, c)$, mapping $\mu:D \rightarrow \mathbb{R}_+$ and a subset $Q \subseteq F \cup D$ of size $m$. The objective is to find a subset $Q' = \{q'_1, \ldots, q'_m\} \subseteq F$ minimizing 
    $\sum_{i \in [m]} c(q_i, q_i') +  \sum_{u \in D} \mu_u \min_{q' \in Q'} c(u, q').$  
\begin{remark}[Equilibrium]
The optimal solution to the problem $\min_{Q'} h(Q')$ is the equilibrium of the leader-follower game between the service provider and the drivers. Since any other configuration will increase the cost function of the service provider. In addition, By Equation~\eqref{eq:minimum_d_i}, waiting in a location other than the one suggested by the service provider will decrease driver's expected profit.  
\end{remark}
\subsection{Approximation Algorithm}
\label{subsec:approx_alg_incentive_pay}
We now propose a constant factor approximation for the minimum pay-to-control problem, namely minimizing Equation~\eqref{eq:incentive_pay_problem}. Let $w_{q'_i, I_i} = \frac{1}{\sigma} \max_{u \in V}  \sigma c(q_i, u) - \mathcal{V}_i(u, B- c(q_i, u)) + V(q'_i, B_i - c(q_i, q'_i))$, then the utility function of the service provider becomes
\begin{align*}
    h(Q') &= \sum_{i \in [m]}  \big( c(q_i, q'_i) - w_{q'_i, I_i} \big) + \beta \sum_{u \in V} p_a(u) \min_{i \in [m]}c(q'_i, u).
\end{align*}

The algorithm follows by a reduction from the minimum pay-to-control problem to MFL.

Given an instance of the minimum pay-to-control problem we construct an MFL instance as follows:
\begin{enumerate}
\item A graph $G = (Q \cup F \cup V, E, c')$ where $F$ is the set of possible waiting locations for the drivers
\item There is an edge between $q_i \in Q$ and $q' \in F$ with cost $c'(q_i, q') = c(q_i, q') - w_{q', I_i}$,
\item There is an edge between $q' \in F$ and $v \in V$ with cost $c'(q', v) = c(q', v)$.
\item The objective is to find a set of $m$ vertices in $F$ such that minimizes 
\[
        C(Q') = \sum_{i \in m}  c'(q_i, q'_i) + \beta \sum_{u \in V} p_a(u) \min_{i \in [m]}c'(q'_i, u).
\]
\end{enumerate} 
\begin{figure}
\centering
    \includegraphics[width=\linewidth]{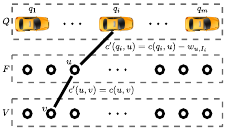}
     \caption{Constructed MFL instance for optimizing the utility of the service provider. An instance of the edges between the subsets is shown with their respective costs.}
\label{fig:incentive_pay_reduction}
\end{figure}
Figure~\ref{fig:incentive_pay_reduction} shows a schematic representation of the constructed MFL instance. 
Suppose $Q'$ is a solution to the MFL instance, we let $Q'$ be the solution of the minimum pay-to-control problem and provide the following result on the cost of the solution.
\begin{lemma}
\label{lem:pay_to_control}
Given an $\alpha$-approximation algorithm for the MFL problem, the reduction above provides an $\alpha$-approximation for the minimum pay-to-control problem.
\end{lemma}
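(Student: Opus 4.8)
The plan is to show the reduction is \emph{cost-preserving}: for every feasible configuration $Q' \subseteq F$ with $|Q'| = m$, the MFL objective $C(Q')$ on the constructed instance is identical to the service provider's cost $h(Q')$ in the minimum pay-to-control problem. First I would set $\mu_u := \beta\, p_a(u)$ (which is non-negative, as required by the MFL definition) and simply unfold the definitions: substituting the relocation-edge costs $c'(q_i, q') = c(q_i, q') - w_{q', I_i}$ and the service-edge costs $c'(q', v) = c(q', v)$ into $C(Q') = \sum_{i\in[m]} c'(q_i, q'_i) + \sum_{u \in V}\mu_u \min_{q'\in Q'} c'(q', u)$ yields exactly $\sum_{i\in[m]}\bigl(c(q_i,q'_i) - w_{q'_i, I_i}\bigr) + \beta\sum_{u\in V} p_a(u)\min_{i}c(q'_i,u) = h(Q')$. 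Here one must check that the min-cost matching of initial to new facilities and the nearest-open-facility service assignment in the MFL objective coincide with the corresponding minimizations in $h$; this holds because both are minima over the same choices, and coincident vertices can be duplicated (as already done in the information-sharing reduction) to remove any tie-breaking ambiguity.

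Given the term-by-term identity $C(Q') = h(Q')$ on the common feasible set, the two problems share the same optimum. Let $Q^*$ be an optimal configuration for the minimum pay-to-control problem, so $h(Q^*) = \mathrm{OPT}$; by the identity it is also optimal for the constructed MFL instance, with $C(Q^*) = \mathrm{OPT}$. Running the $\alpha$-approximation algorithm for MFL on this instance returns a configuration $\hat Q$ with $C(\hat Q) \le \alpha\, C(Q^*)$. Since $C = h$ on feasible configurations, $h(\hat Q) = C(\hat Q) \le \alpha\, C(Q^*) = \alpha\, h(Q^*)$, so $\hat Q$ is an $\alpha$-approximate solution to the minimum pay-to-control problem (this uses $\mathrm{OPT} \ge 0$ for the multiplicative guarantee to be meaningful).

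I expect the main obstacle to be not this inequality chain but verifying that the construction actually produces a \emph{valid} MFL instance --- that is, that $c'$ is non-negative and metric on the closure of $Q \cup F \cup V$. The service edges $c'(q',v) = c(q',v) \ge 0$ inherit non-negativity and the triangle inequality from the metric graph $G$, but the relocation edges $c'(q_i, q') = c(q_i, q') - w_{q', I_i}$ could a priori be negative, since relocating a driver toward a more profitable waiting location may cost the provider nothing. I would handle this by arguing from the model that the \emph{net} relocation cost is non-negative --- the provider never strictly profits from moving a driver, i.e.\ $w_{q', I_i} \le c(q_i, q')$ --- or, failing that, by restricting $F$ to such locations; and then checking that the induced shortest-path metric is well defined (no negative cycles, since the only possibly-negative edges all lie on the $Q$--$F$ side of the tripartite graph) and satisfies the triangle inequality. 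Once the instance is certified to be a legitimate MFL input, the three steps above complete the argument.
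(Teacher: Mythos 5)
Your proposal is correct and follows essentially the same route as the paper: both establish that the constructed MFL objective agrees with the provider's cost on every feasible $Q' \subseteq F$ (the paper writes this as $h(Q') = \sigma C(Q')$, a constant-factor bookkeeping difference only) and then transfer the $\alpha$-approximation guarantee through the same inequality chain $h(\hat Q) = \sigma C(\hat Q) \le \alpha\,\sigma\min_{Q^*} C(Q^*) = \alpha \min_{Q^*} h(Q^*)$. Your extra scrutiny of whether the construction yields a legitimate MFL input (possible negativity of $c'(q_i,q') = c(q_i,q') - w_{q',I_i}$, metricity, and nonnegativity of the optimum needed for a multiplicative guarantee) goes beyond what the paper verifies, but it does not alter the argument's structure.
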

\begin{proof}
For any $Q' \subseteq F$, by the construction of the MFL instance, we have $h(Q') = \sigma C(Q')$. Therefore, given an $\alpha$-approximation algorithm for the MFL problem, and $Q'$ obtained from the constructed MFL instance, we select $Q'$ as a solution to the minimum pay-to-control problem. Therefore, 
\[
    h(Q') = \sigma C(Q') \leq \alpha \sigma \min_{Q^* \subseteq F} C(Q^*) = \alpha \min_{Q^* \subseteq F} h(Q^*).\qedhere
\]
\end{proof}

By the result of Lemma~\ref{lem:pay_to_control}, the $3 + o(1)$-approximation algorithm for the MFL problem in~\cite{ahmadian2013local} applies to the minimum pay-to-control problem.

\section{Simulation Results}
\label{sec:simulation_results}
We evaluate the performance of the two proposed indirect controls on ride-sourcing data from Uber~\cite{UberData}. The data set consists of the pick-up time and locations from April to September 2014 in New York City, primarily Manhattan. The jammed scenario occurs frequently with high demand in an area and especially when surge price is applied for the high demand area~\cite{surgeprice, rosenblat2016algorithmic}.

To reduce the complexity of the large data set with $914$ pick-up locations, we cluster the close pick-up locations into $125$ clusters such that no two pick-up locations in a cluster are farther than $500$ meters apart. The drop-off location for each ride is selected from the same set of clusters with equal probability. Figure~\ref{fig:cluster-data} shows the clustered pick-up locations and the arrival rates for ride requests at each cluster is represented with a bar. The performance of the proposed control methods are evaluated in two scenarios: 1) the initial location of the drivers are selected uniformly randomly, and 2) jammed scenario where the drivers are initialized at $20$ closest locations to the Rockefeller center in Manhattan. 

\begin{figure}
\centering
    \includegraphics[width=\linewidth]{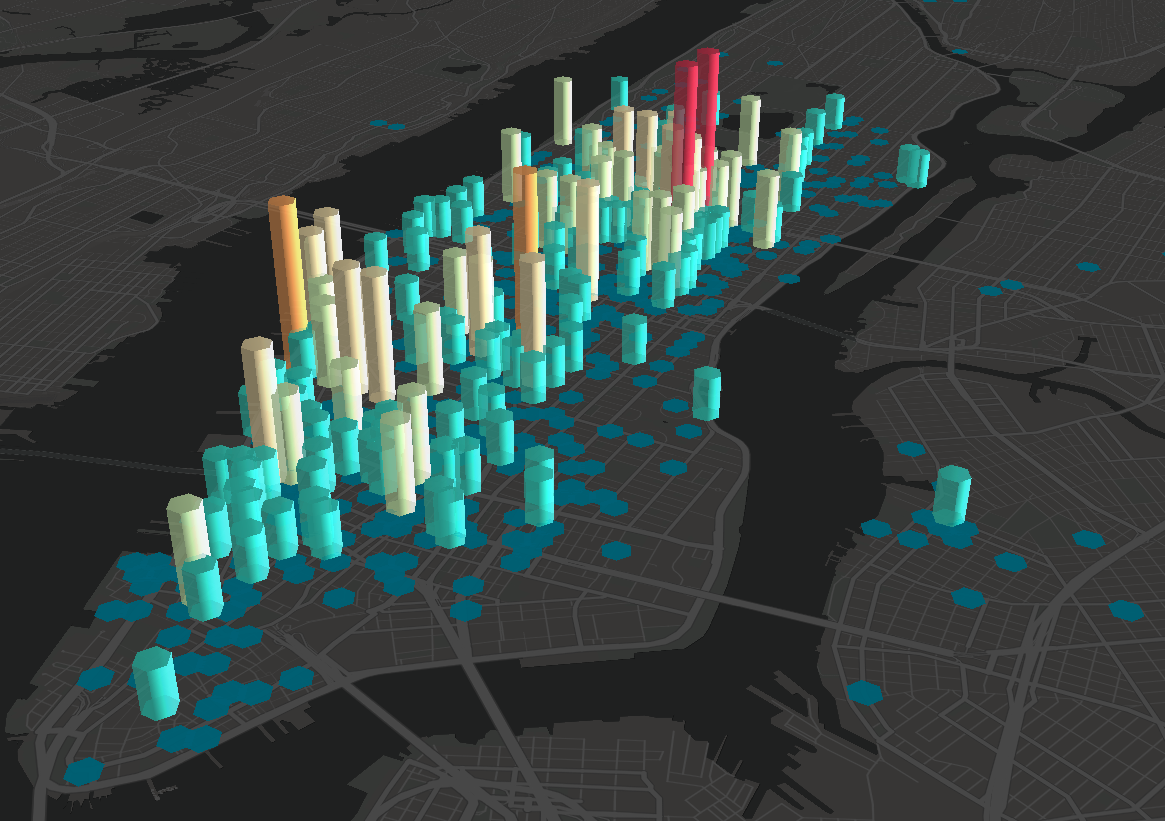}
     \caption{The set of pick-up and drop off locations in Manhattan N.Y. The bar at the location of each cluster represents the ride request arrival rate. }
\label{fig:cluster-data}
\end{figure}

Observe that the proposed algorithms to find the controls are applicable to various driver models for $\mathcal{V}$. In the following section, we propose a behavior model $\mathcal{V}$ for the drivers.

\subsection{Drivers' model}
 A driver model is a process of evaluating the expected profit of different locations at each time instance. Prior to introducing our driver model, we provide the parameters in the proposed model.   

The \emph{pick-up probability} of a location $v$ as seen by driver $i$ is the probability that a ride-request at $v$ is assigned to driver $i$ before any other ride request. Consider a configuration $Q$ of the drivers and a vertex $u$, where driver $i$ is the $k$th closest driver to the vertex location $u$. Since a request is assigned to the closest available driver, then driver $i$ can expect to be assigned to the $k$th request arriving at $u$.  Let $S_i(v, u)$ denote the number of drivers closer to $v$ than driver $i$ located at vertex $u$. Note that $S_i(v, u)$ is a function of the information of driver $i$ regarding the position of other drivers. Figure~\ref{fig:closeness} illustrates an instance with three vehicles and a pick-up location. Given the full information of the vehicle positions, $S_{\text{red}}$ at pick-up location is $2$ and $S_{\text{red}}$ is zero if  $I_{\text{red}} = \emptyset$.

The probability that a request at $v$ is assigned to driver $i$ located at $u$ as expected by driver $i$ before any other request is approximated by
\begin{align*}
    p_i(v, u) &= \Pi_{w \in V}\sum_{k = S_i(v, u) + 1}^{S_i(v, u) + S_i(w, u) + 1}{\sum_{w \in V}S_i(w, u) + 1 \choose k}\\&\big( \frac{\lambda_v}{\lambda_v + \lambda_w} \big)^{k}\big( \frac{\lambda_w}{\lambda_v + \lambda_w}\big)^{S_i(v, u) + S_i(w, u) + 1 - k}.
\end{align*}

Let $\sigma'$ be the fare per unit time of servicing a request, then the profit of a ride with pick-off at $v$ and drop-off at $w$ for driver $i$ located at $u$ is $\sigma' c(w, v) - \sigma c(u, v)$.
\begin{figure}
\centering
\begin{subfigure}{0.45\linewidth}
\centering
    \includegraphics[width=\linewidth]{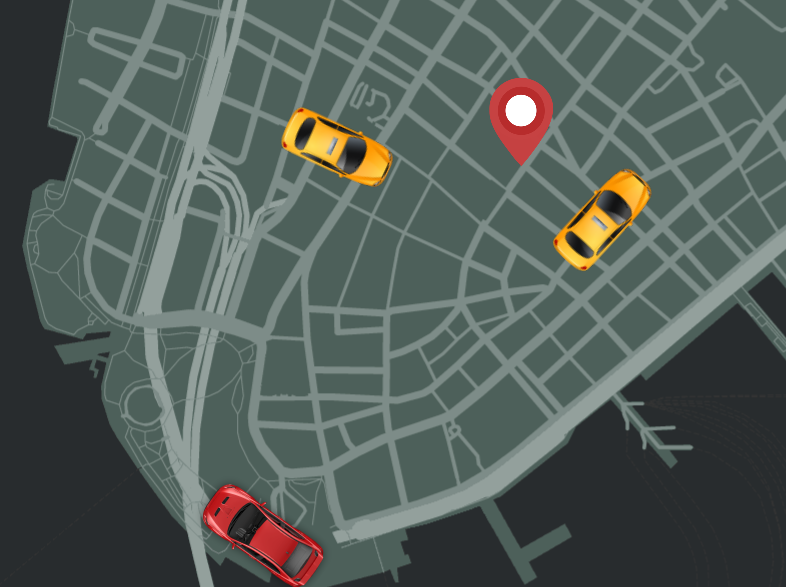}
     \caption{$I_{\mathrm{red}} = Q$}
\label{fig:closeness_1}
\end{subfigure}
\hfill
\begin{subfigure}{0.45\linewidth}
\centering
    \includegraphics[width=\linewidth]{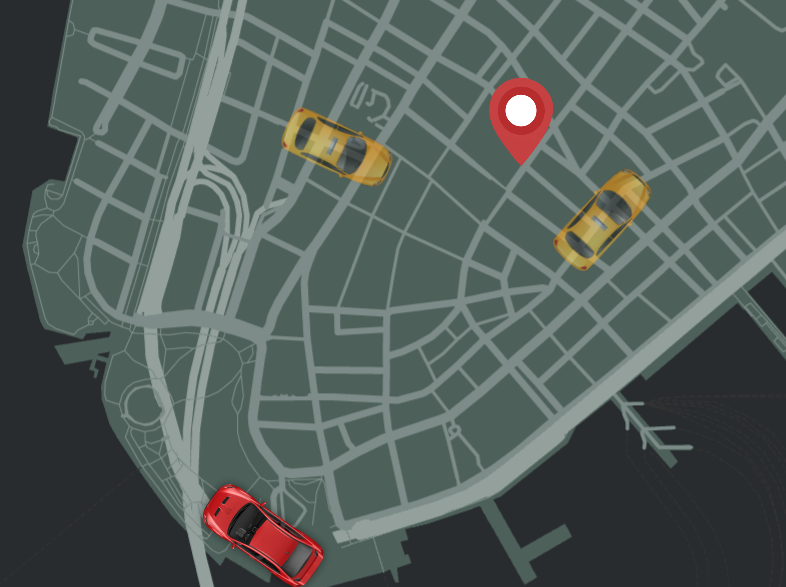}
     \caption{$I_{\mathrm{red}} = \emptyset$}
\label{fig:closeness_2}
\end{subfigure}
\caption{Environment with three drivers and a pick-up location $v$. (a) The red car has full information on the position of the other drivers, therefore, $S_{\mathrm{red}}(v, q_{\mathrm{red}}) = 2$ , (b) The positions of the other drivers are not known to the driver of red car, thus, $S_{\mathrm{red}}(v, q_{\mathrm{red}}) = 0$. }
\label{fig:closeness}
\end{figure}

 Now we define the expected profit of driver $i$ located at $u$ and working for $B_i$ period of time as follows:

\begin{align}
    \label{eq:expected_profit_driver}
    \mathcal{V}_i(u, B_i) &= \sum_{v, w \in V} p_d(w|v) \big[ p_i(v,u) \big(\max\{0, \sigma' c(w, v) \\&- \sigma c(u, v) \nonumber+ \mathcal{V}_i(w, B_i - c(u, v) - c(v, w))\}\big)\big],\nonumber
\end{align}
where $\mathcal{V}_i(u, 0) = 0$ for all $u \in V$, drivers $i \in [m]$ and $I_i \subseteq Q$. Note that,  $\mathcal{V}_i(w, B_i - c(u, v) - c(v, w))$ represents the expected profit of driver $i$ after drop-off at $w$. The conditional probability of the drop-off location is the common knowledge of the drivers and the ride-sourcing companies based on prior customer data.

Intuitively, the expected profit in Equation~\eqref{eq:expected_profit_driver} represents the total expected profit of servicing requests with pick-up location at $v$ and drop-offs at $w$ only if the 
\[
    \sigma' c(w, v)- \sigma c(u, v) + \mathcal{V}_i(w, B_i - c(u, v) - c(v, w)) \geq 0.
\]
Observe that finding $\mathcal{V}_i(u, B_i)$ for all $u \in V$ and a given $B_i$ is performed in polynomial time, however, since the drivers are not going through the calculation of $\mathcal{V}_i(u, B_i)$ for all $u \in V$, we assume that they have access to the expected profit of the vertices by experience.

Since the calculation of the expected profit for drivers for each time step is computationally expensive, we trained a Random Forest Regressor~\cite{breiman2001random} implemented by~\cite{scikit-learn} to approximate the values of $\mathcal{V}$ for each number of vehicles in the system with training data over $10000$ instances with work-day $B_i$ of $5$ average length rides, fare $\sigma' = \$ 1.06$ per mile and driving cost $\sigma = \$0.3$~\cite{uber_cost}.

\subsection{Partial Information Sharing} Figure~\ref{fig:information_sharing_res} shows the percentage improvement in the expected response time for different number of vehicles using the partial information sharing control method of Section~\ref{sec:sharing_info} in the two scenarios. Observe that as the number of drivers increases, the average improvement in the expected response time increases. However, with a large number of drivers randomly placed in the environment, the expected response time decreases and the possibility to further optimize it with information sharing is limited. On the other hand, in the jammed scenario where the drivers are concentrated in an area, the information sharing method improves the expected response time by $10\%$ on average. The information sharing method shares information on the position of the other drivers with $31.0 \%$ and $41.7 \%$ of the drivers in the random initial configuration and jammed scenarios, respectively. The results are the average of $1000$ instances for different number of drivers and scenarios. The boxes show the first, first and third quartiles of each set of experiments. The expected response time of the solution obtained from the LP-rounding algorithm of Section~\ref{sec:sharing_info} on this set of experiments is within $0.014\%$ on average of the solution of the LP relaxation of the information sharing problem. The maximum deviation from the optimal solution of the LP relaxation is $0.42\%$.

\begin{figure}
\centering
    \includegraphics[width=\linewidth]{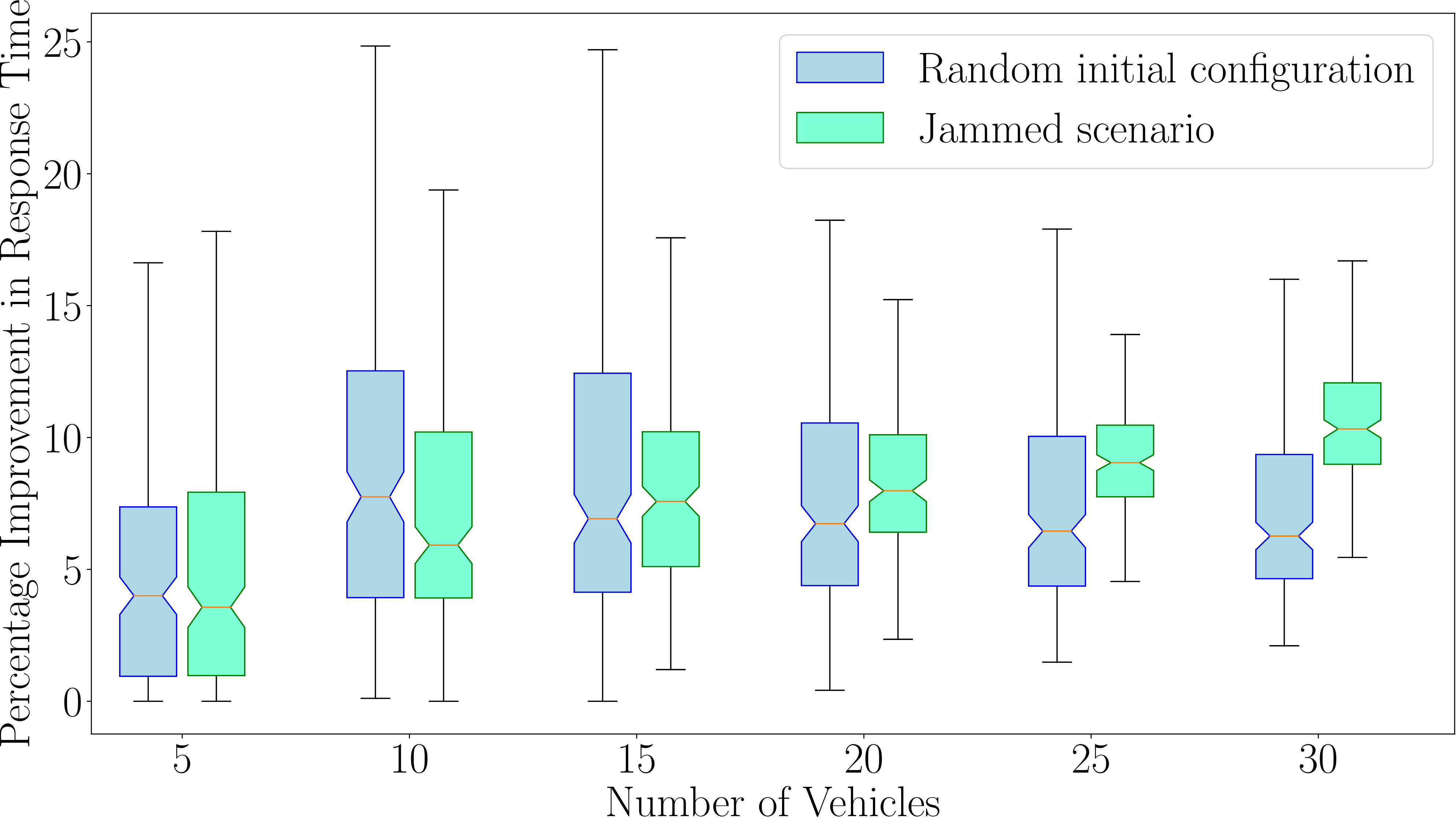}
     \caption{Improvement in expected response time by sharing location information.}
\label{fig:information_sharing_res}
\end{figure}

Figure~\ref{fig:seq_information_sharing_res} shows the expected response time of a set of $20$ drivers responding to $100$ requests arriving over time with the jammed initial configuration. In this experiment, the maximum arrival rate on the vertices is $0.03$ per minute, therefore, we assumed that there exist enough time to relocate between the ride request arrivals. Note that the partial information method maintains a low expected response time over the course of responding to $100$ requests compared to the same set of drivers with no control input on their waiting locations. The results are an average of $100$ experiments with $100$ randomly generated requests for each experiment. The lines represent the average and shaded areas represent the first and third quartiles.
\begin{figure}
\centering
    \includegraphics[width=\linewidth]{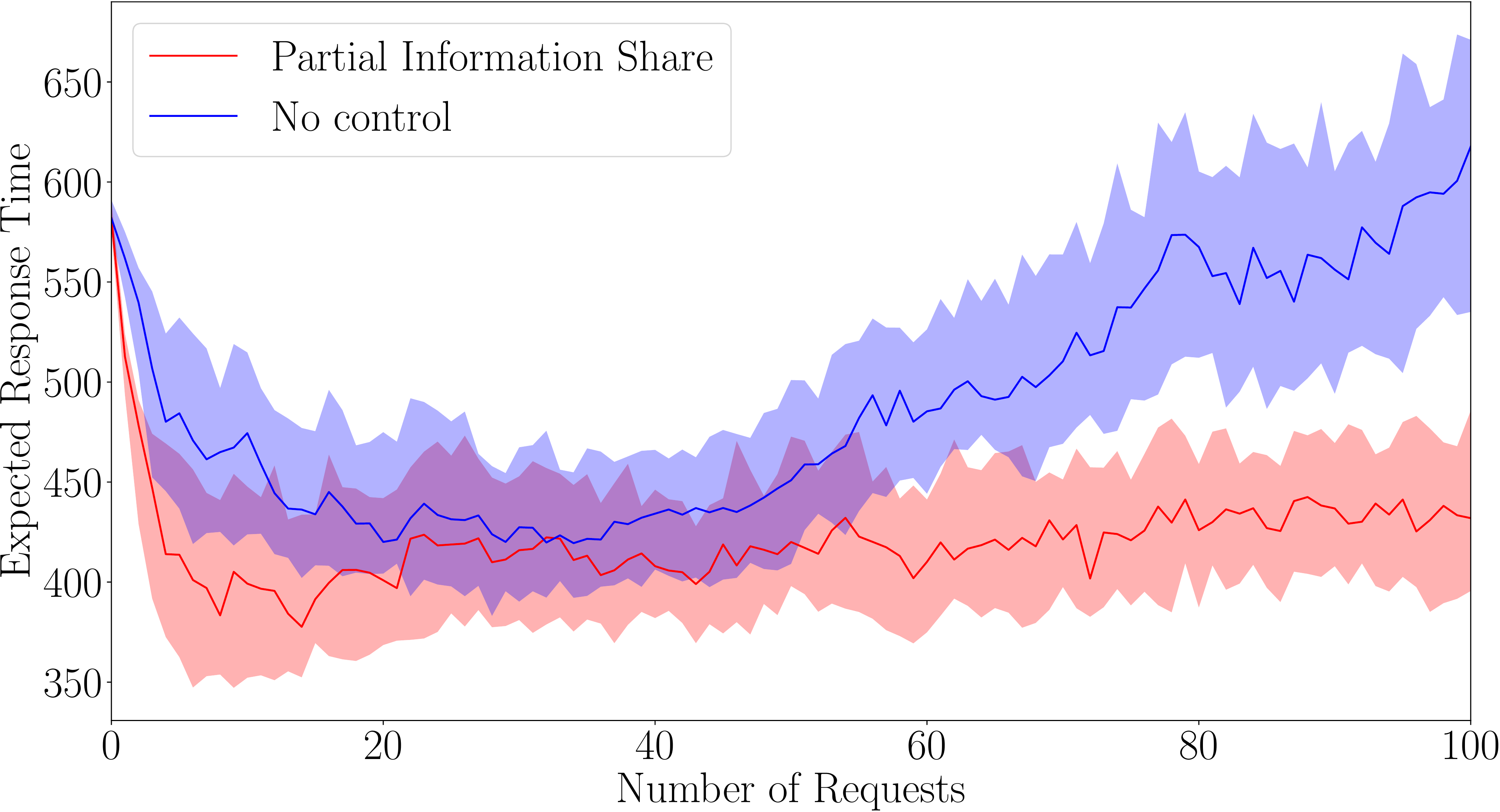}
     \caption{The expected response time of a system of $20$ drivers executing $100$ ride requests arriving over time under the partial information control method. The drivers are initialized under the jammed scenario.}
\label{fig:seq_information_sharing_res}
\end{figure}

\subsection{Pay to Control}
In this section, we evaluate the performance of the \textsc{Pay-To-Control} method for the two scenarios. Figure~\ref{fig:paid_control} illustrates the improvement in the expected response (solid lines) time and the total amount paid to the drivers (dashed lines) to relocate for different $\beta$ values in the random initial configuration scenario. The shaded area represents the first and third quartiles of $1000$ random instances for each number of vehicles. The amount paid is proportional to the average cost of riding UberXL, i.e., $\$0.3 $ per minute~\cite{uber_cost}. For larger $\beta$, the expected response time is more important than the amount paid for relocation. Therefore, with a larger number of vehicles, the \textsc{Pay-To-Control} method increases the amount paid to the drivers to minimize the expected response time. Notice that with $\beta = 10$ the expected response time has improved by $25 \%$ for $\$1$ per driver.   
\begin{figure}
\centering
    \includegraphics[width=\linewidth]{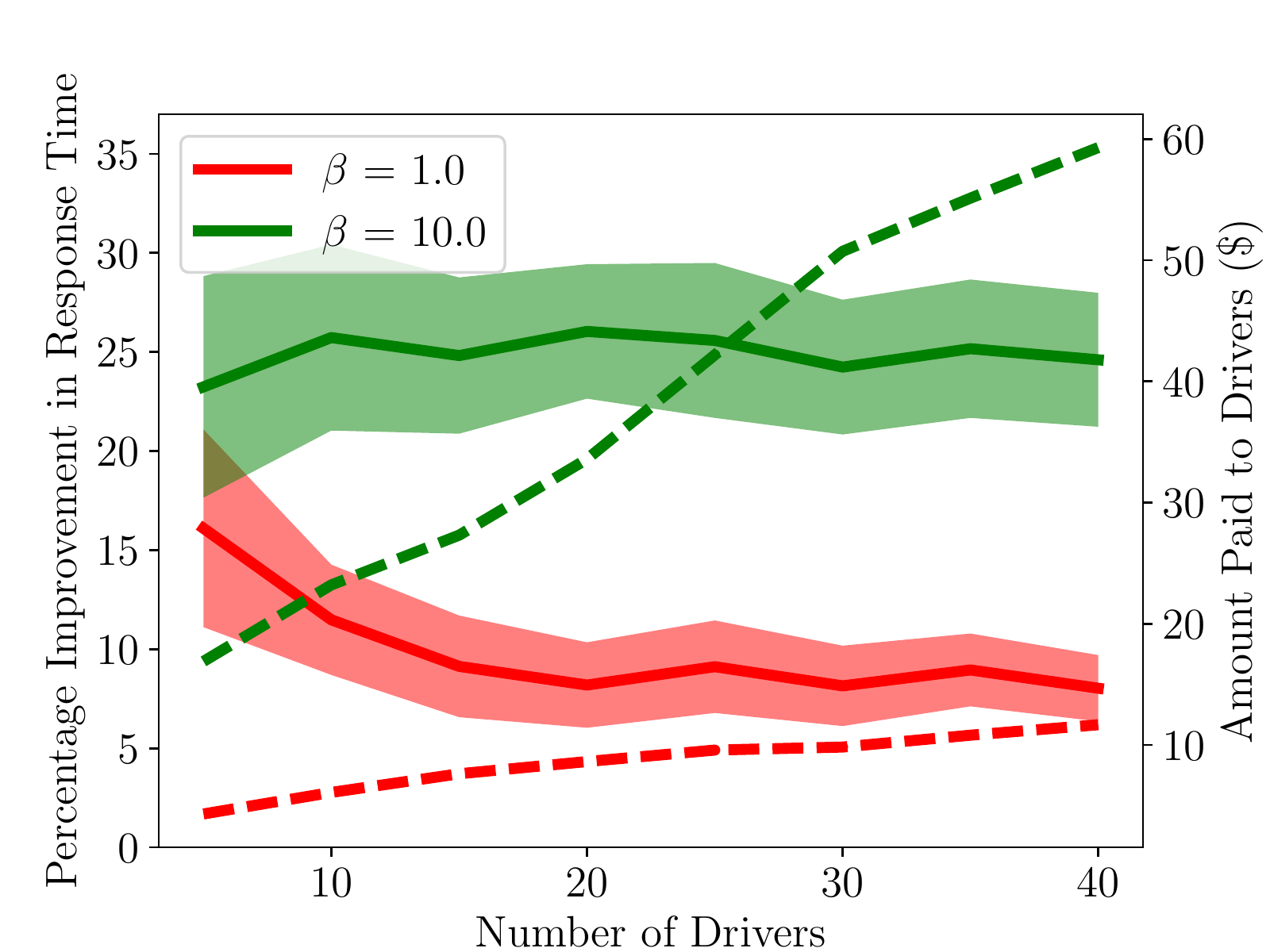}
     \caption{The percentage improvement in expected response time and the incentive pay. The solid lines represent the average improvement in the expected response time and the dashed lines represent the total amount paid to the drivers.}
\label{fig:paid_control}
\end{figure}

Next, we evaluate the performance of the pay-to-control algorithm in the jammed scenario. Figure~\ref{fig:paid_control_jam} shows that with a larger number of vehicles concentrated in a small area, the pay-to-control algorithm improves the expected response time significantly with limited amounts paid to the drivers. Notice that with $\beta = 10$ the expected response time has improved by $70 \%$ for $\$ 2$ per driver.
\begin{figure}
\centering
    \includegraphics[width=\linewidth]{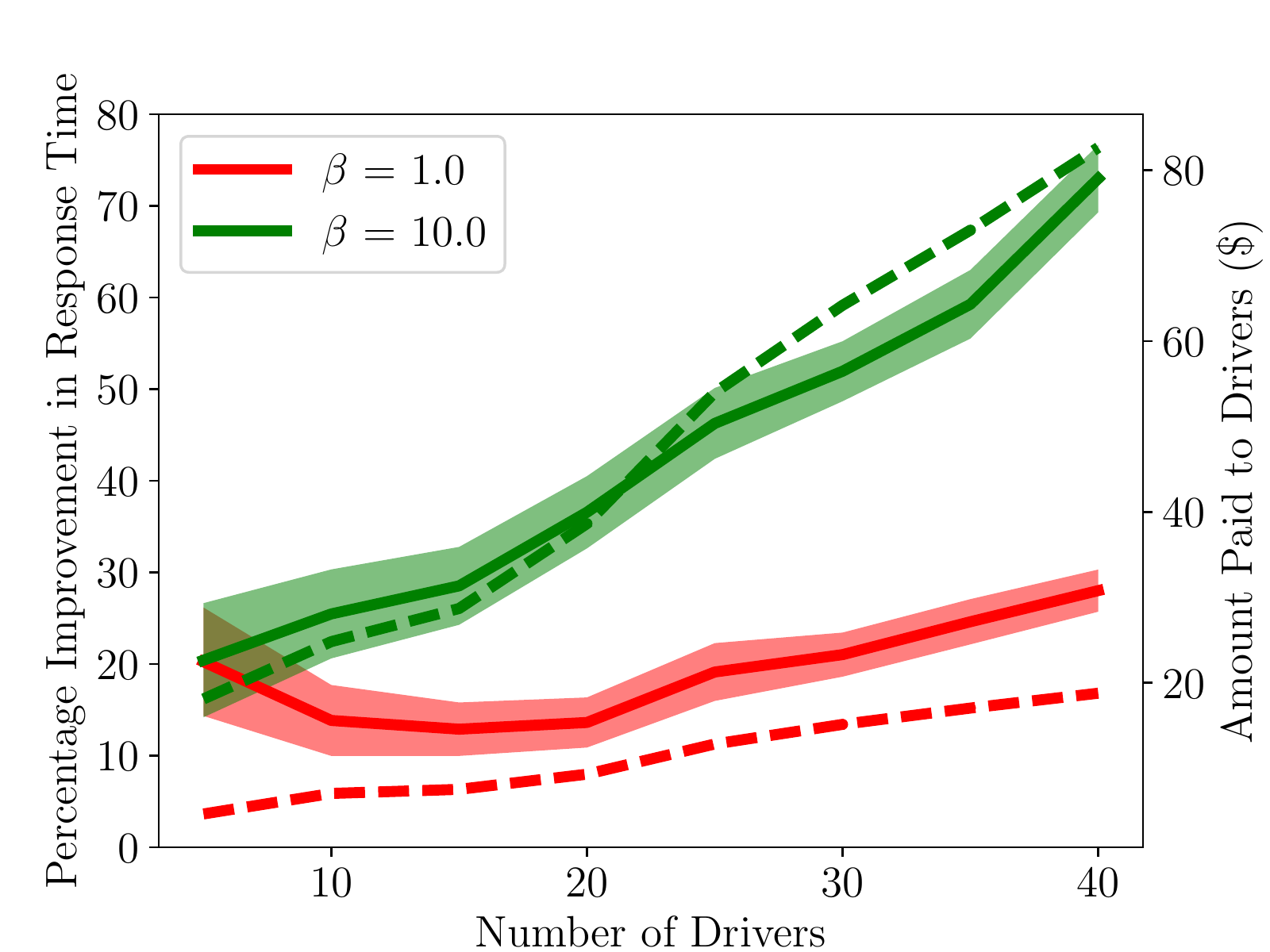}
     \caption{The percentage improvement in the expected response time and the incentive pay in the jammed scenario. The solid lines represent the average improvement in the expected response time and the dashed lines represent the total amount paid to the drivers.}
\label{fig:paid_control_jam}
\end{figure}

Figure~\ref{fig:seq_paid_control_jam} shows the expected response time and the amount paid to a set of $20$ drivers responding to $100$ requests arriving over time with the jammed initial configuration. Notice that the \textsc{Pay-To-Control} method with $\beta = 1$ maintains a low expected response time over the course of responding to $100$ requests compared to the same set of drivers with no control input on their waiting locations. The total amount paid to the drivers over the course of responding to $100$ requests is $\$ 1.87$ per request. The results are an average of $100$ experiments with $100$ randomly generated requests for each experiment. The lines represent the average and shaded areas represent the first and third quartiles.

\begin{figure}
\centering
    \includegraphics[width=\linewidth]{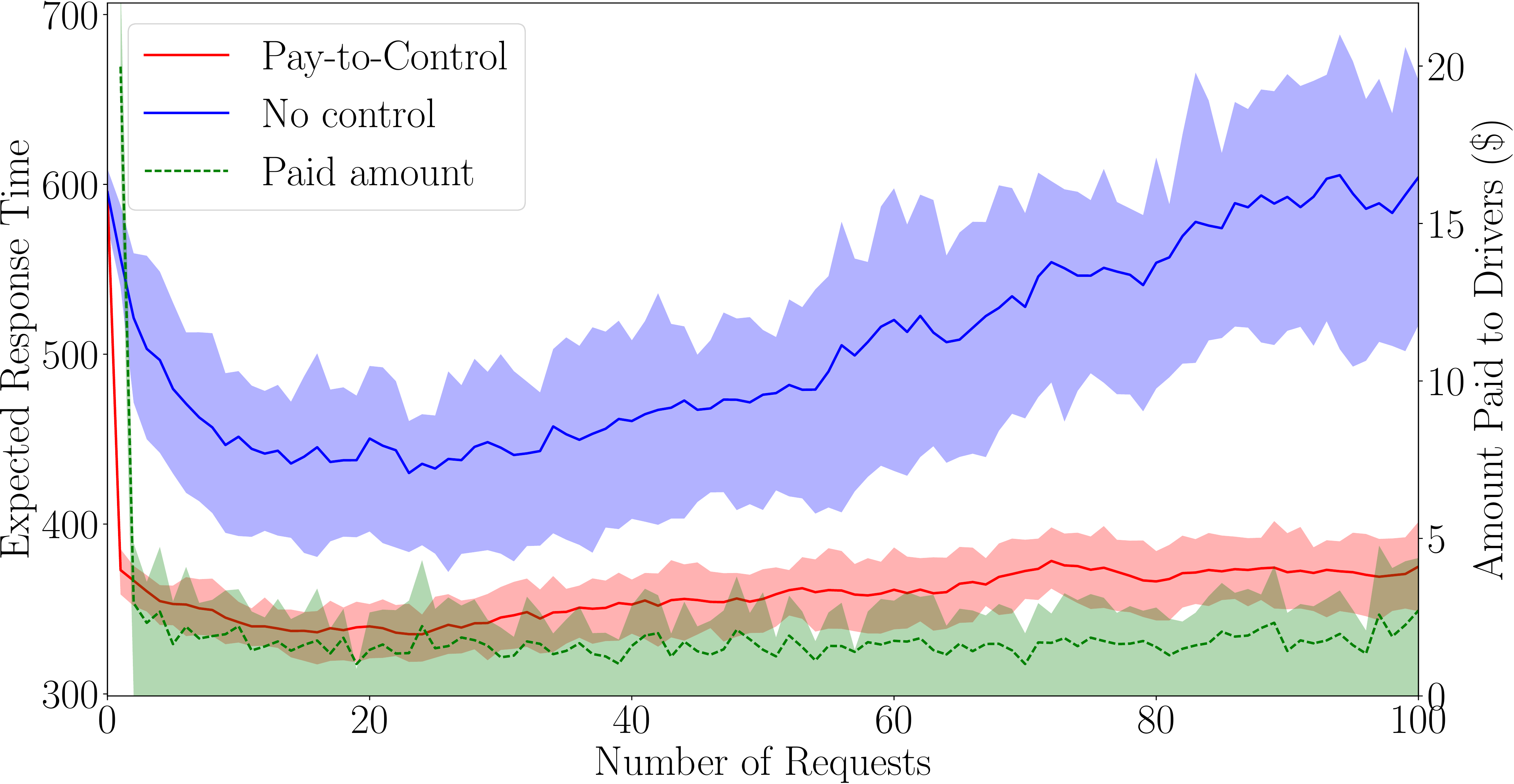}
     \caption{The expected response time and the paid amount to a system of $20$ drivers executing $100$ ride requests arriving over time under the \textsc{Pay-To-Control} method. The drivers are initialized under the jammed scenario.}
\label{fig:seq_paid_control_jam}
\end{figure}

\section{Conclusion}
\label{sec:conclusion}
This paper considered the problem of controlling self-interested drivers in ride-sourcing applications. Two indirect control methods were proposed and for each, a near-optimal algorithm was presented. The extensive results show
significant improvement in the expected response time on real-world ride-sourcing data. In addition, we hope to extend the results to capture vehicles with different capacities and ride-sharing applications.

\bibliographystyle{IEEEtran}

\end{document}